\newtheorem{lemma}{Lemma}
\newtheorem{theorem}{Theorem}
\newtheorem{corollary}{Corollary}
\newtheorem{observation}{Observation}
\newcommand{\Poly}{\ensuremath{\mathcal{P}}}               
\newcommand{\bd}{\ensuremath{\partial \Poly} }                 
\newcommand{\Vis}{\ensuremath{\mathrm{Vis}_\Poly}} 
\newcommand{\VisC}{\ensuremath{\mathrm{Vis}_{\mathcal{C}}}} 
\newcommand{\Rin}{\ensuremath{r}}      
\newcommand{\Rout}{\ensuremath{\bar{r}}}      
\newcommand{\Hout}{\ensuremath{\bar{h}}}      
\newcommand{\chain}{\ensuremath{{\mathrm{Chain}}}}     
\newcommand{\region}{\ensuremath{{{\mathcal R}}}}     
\newcommand{\regionC}{\ensuremath{{{\mathcal R_{\mathcal C}}}}}     
\newcommand{\regone}{\ensuremath{{{\mathcal R_{\mathcal C_{1}}}}}}     
\newcommand{\regtwo}{\ensuremath{{{\mathcal R_{\mathcal C_{2}}}}}}
\newcommand{\coneC}{\ensuremath{{{\Delta_{\mathcal C}}}}} 
\newcommand{\C}{\ensuremath{{\mathcal C}}}
\newcommand{\marrow}{\marginpar[\hfill$\longrightarrow$]{$\longleftarrow$}}
\newcommand{\remark}[2]{\textcolor{red}{\textsc{#1 says:} \marrow\textsf{#2}}}
\newcommand{\rodrigo}[1]{\remark{Rodrigo}{#1}}
\title{Computing a visibility polygon using few variables\thanks{A preliminary version of this paper appeared in the proceedings of the 22nd International Symposium on Algorithms and Computation (ISAAC 2011)~\cite{bkls-cvpufv-11}.}}
\author{
Luis Barba
\thanks{Universit\'e Libre de Bruxelles (ULB), Brussels, Belgium. {\tt \{lbarbafl,stefan.langerman\}@ulb.ac.be}}
\and Matias Korman
\thanks{Universitat Polit\`{e}cnica de Catalunya (UPC), Barcelona, Spain. {\tt  \{matias.korman, rodrigo.silveira\}@upc.edu}. With the support of the Secretary for Universities and Research of the Ministry of Economy and Knowledge of the Government of Catalonia, the European Union, the FP7 Marie Curie Actions Individual Fellowship PIEF-GA-2009-251235, and ESF EUROCORES programme EuroGIGA - ComPoSe IP04 - MICINN Project EUI-EURC-2011-4306.}
\and Stefan Langerman\footnotemark[2]
\thanks{Directeur de Recherches du FRS-FNRS.}
\and Rodrigo I. Silveira\footnotemark[3]
}
\date{}
\begin{document}
\maketitle

\begin{abstract}
We present several algorithms for computing the visibility polygon of a simple polygon $\Poly$ of $n$ vertices (out of which $r$ are reflex) from a viewpoint inside $\Poly$, when $\Poly$ resides in read-only memory and only few working variables can be used.
 The first algorithm uses a constant number of variables, and outputs the vertices of the visibility polygon in $O(n\Rout)$ time, where $\Rout$ denotes the number of reflex vertices of $\Poly$ that are part of the output. Whenever we are allowed to use $O(s)$ variables, the running time decreases to $O(\frac{nr}{2^{s}}+n\log^2 r)$ (or $O(\frac{nr}{2^{s}}+n\log r)$ randomized expected time), where $s\in O(\log r)$. This is the first algorithm in which an exponential space-time trade-off for a geometric problem is obtained.
\end{abstract}

\section{Introduction}

The \emph{visibility polygon} of a simple polygon $\Poly$ from a viewpoint $q$ is the set of all points of $\Poly$ that can be seen from $q$, where two points $p$ and $q$ can see each other whenever the segment $pq$ is contained in $\Poly$.
The visibility polygon is a fundamental concept in computational geometry and one of the first problems studied in planar visibility. 
The first correct and optimal algorithm for computing the visibility polygon from a point was found by  Joe and Simpson\cite{js-clvpa-87}. It computes the visibility polygon from a point in linear time and space. 
We refer the reader to the survey of O'Rourke \cite{r-v-04} and the book of Gosh~\cite{g-vap-07} for an extensive discussion of such problems.



In this paper we look for an algorithm that computes the visibility polygon of a given point and uses few variables. This kind of algorithm not only provides an interesting trade-off between running time and memory needed, but is also useful in portable devices where important hardware constraints are present (such as the ones found in digital cameras or mobile phones). 
In addition, this model has direct applications in concurrent environments where several devices with limited memory resources perform some computation on a large centralized input. 
Since several devices may access the input at the same time, allowing writing to the input memory can result in compromising its integrity.

A significant amount of research has focused on the design of algorithms that use few variables, some of them even dating from the 80s \cite{mp-ssls-80}. Although many models exist, most of the research considers that the input is in some kind of read-only data structure. In addition to the input values, we are allowed to use few additional variables (typically a variable holds a logarithmic number of bits). 

One of the most studied problems in this setting is that of selection. For any constant $\epsilon \in (0,1)$, Munro and Raman \cite{mr-sromswmdm-96} gave an algorithm that runs in $O(n^{1+\epsilon})$ time and uses $O(1/\epsilon)$ variables. Frederickson~\cite{Frederickson87} extended this result to the case in which $s$ working variables are available (and $s\in \Omega (\log n)\cap O(2^{\log n/\log^* n})$). Raman and Ramnath~\cite{rr-iubtstsls-98} gave several exact and approximation algorithms for the case in which fewer variables are available. Among other results, they provide a $2/3$-approximation of the median that runs in $O(sn^{1+1/s})$ time using $O(s)$ variables (for $s\in o(\log n)$), or $O(n\log n)$ time, using $O(\log n)$ variables. More recently Chan \cite{Chan} provided several lower bounds for performing selection with few variables. 

In recent years there has been a growing interest in finding algorithms for geometric problems that use a constant number of variables.
An early example is the well-known gift-wrapping algorithm (also known as Jarvis march~\cite{j-ich-73}), which can be used to  report the points on the convex hull of a set of $n$ points in $O(n\Hout)$ time using a constant number of variables, where \Hout\ is the number of vertices on the convex hull. 
Recently, Asano and Rote~\cite{ar-cwagp-09} and afterwards Asano {\em et al.} \cite{amw-cwaspsp-10,abbkmrs-mcasp-11} gave efficient methods for computing well-known geometric structures, such as the Delaunay triangulation, the Voronoi diagram, a polygon triangulation, and a minimum spanning tree (MST) using a constant number of variables. These algorithms run in $O(n^2)$ time (except computing the MST, which needs $O(n^3)$ time). Observe that, since these structures have linear size, they are not stored but reported. 
 Prior to this work, there was no algorithm for computing the visibility polygon in memory-constrained models. Indeed, this problem was explicitly posed as an open problem by Asano {\em et al.}~\cite{amrw-cwagp-10} for the case in which only a constant number of variables are allowed.

\subsubsection*{Results}

In this paper we present a novel approach for computing the visibility polygon of a given point inside a simple polygon. 
It is easy to see that reflex vertices have a much larger influence on the visibility polygon than convex vertices. Therefore, whenever possible we express the running time of our algorithms not only in terms of $n$, the complexity of $\Poly$, but also in terms of  $\Rin$ and $\Rout$ (the number of reflex vertices of $\Poly$ that are present in the input and in the output, respectively). This approach continues a line of research relating the combinatorial and computational properties of polygons to the number of their reflex vertices. We refer the reader to \cite{bdhilm-ghsc-07,akpv-got-12,bchm-gwp-10} and references found therein for a deep review of existing similar results.

In Section~\ref{sec:Preliminaries} we begin the paper with some preliminaries, followed by some observations and basic algorithms in Section~\ref{sec:SimpleAlgorithm}. In Section \ref{section:SequentialAlgorithms} we give an output-sensitive algorithm that reports the vertices of the visibility polygon in $O(n\Rout)$ time using $O(1)$ variables. 
 Using this algorithm as a stepping stone, in Section \ref{Section:Divide and conquer} we present a divide-and-conquer algorithm. This algorithm runs in $O(\frac{nr}{2^{s}}+n\log^2 r)$ time (or $O(\frac{nr}{2^{s}}+n\log r)$ randomized expected time)  using $O(s)$ variables (for any $s\in O(\log r)$), giving an exponential trade-off between running time and space. 
 

{\bf Remark:} prior to this research there was no known method for computing visibility polygons using few variables. Following the conference version of this paper~\cite{bkls-cvpufv-11}, De {\em et al.}~\cite{dmn-seavpsp-12} provided a linear-time algorithm that uses $O(\sqrt{n})$-variables. Parallel to this research, Barba {\em et al.}~\cite{bklss-sttosba-13} gave a general method for transforming stack-based algorithms into memory constrained workspaces. Since Joe and Simpson's algorithm for computing the visibility polygon~\cite{js-clvpa-87} is stack-based, their approach can be used for this problem as well. 

\section{Preliminaries}
\label{sec:Preliminaries}
\subsubsection*{Model definition and considerations on input/output precision}
We use a slight variation of the constant workspace model, introduced by Asano and Rote~\cite{ar-cwagp-09}. In this model the input of the problem resides in a read-only data structure and we are allowed to perform random access to any of the values of the input in constant time. 
An algorithm can use a constant number of variables and we assume that each variable or pointer contains a data word of $O(\log n)$ bits. 
Implicit storage consumption required by recursive calls is also considered part of the workspace. This model is also referred as {\em log space}~\cite{AB09} in the complexity literature.

Many other similar models have been studied. 
We note that in some of them (like the {\em streaming}~\cite{gk-seocqs-01} or the {\em multi-pass} model~\cite{cc-mpga-07}) the values of the input can only be read once or a fixed number of times. 
As in the constant workspace model of  Asano and Rote~\cite{ar-cwagp-09}, our model allows scanning the input as many times as necessary. 
However, our model differs from theirs in two aspects: we are allowed to use a workspace of $O(s)$ variables (instead of $O(1)$), and we do not require random access to the vertices of the input. 

The input to our problem is a simple polygon $\Poly$ in a read-only data structure and a point $q$ in the plane, from where the visibility polygon needs to be computed.
We do not make any assumptions on whether the input coordinates are rational or real numbers (in some implicit form). The only operations that we perform on the input are determining whether a given point is above, below or on a given line and determining the intersection point between two lines. In both cases, the line is defined as passing through two points of the input, hence both operations can be expressed as finding the root of linear equations whose coefficients are values of the input. We assume that these two operations can be done in constant time. Moreover, if the coordinates of the input are algebraic values, we can express the coordinates of the output as ``the intersection point of the line passing through points $p_i$ and $p_j$ and the line passing through $p_k$ and $p_l$" (where $p_i,p_j,p_k$ and $p_l$ are vertices of the input).

\subsubsection*{Definitions and basic properties}
We say that a point $p$ is \emph{visible} from $q$ (with respect to $\Poly$) if and only if the segment $pq$ is contained in $\Poly$ (note that we regard $\Poly$ as a closed subset of the plane). The set of points visible from $q$ is called the \emph{visibility polygon} of \Poly, and is denoted \Vis. Note that if $q$ is outside of \Poly\ then \Vis\ is by definition empty. Thus, when considering visibility with respect to polygons, we always assume that $q$ is inside the polygon. From now on, we assume that $q$ is fixed, hence we omit the ``with respect to $q$" term.

We assume we are given $\Poly$ as a list of its vertices in counterclockwise order along its boundary (denoted by $\bd$). Let $p_0$ be a point on $\bd$ closest to $q$ on the horizontal line passing through $q$. It is easy to see that $p_0$ is visible and can be computed in linear time. In the following, we will treat $p_0$ as a vertex of \Poly\ (even though it does not need to be one). By implicitly reordering the vertices of the input, we can assume that we are given the vertices of $\Poly$ in counterclockwise direction starting from $p_0$ (i.e., $\Poly = (p_0, \ldots, p_n)$). 

For simplicity of exposition, we assume that the vertices of $\Poly$ are in a weak general position; that is, we assume that there do not exist two vertices $p,p'\in\Poly$ such that $p, p'$, and $q$ are aligned (but we note that the algorithms can be extended easily for the general case). 

Along this paper, we will often work with polygonal chains (instead of polygons). However, we will restrict our scope to polygonal chains contained in $\bd$. For any two points $a,b$ on $\bd$, there is a unique path from $a$ to $b$ that travels counterclockwise along $\bd$; let $\C = \chain(a,b)$ be the set of points traversed in this path, including both $a$ and $b$ (this set is called the {\em chain} between $a$ and $b$). We say that $a$ and $b$ are the \emph{endpoints} of $\mathcal C$, and we refer to the rest of the points on $\mathcal C$ as its \emph{interior points}. 

We now extend the concept of visibility to chains. Due to technical reasons, we define this concept for chains contained in $\bd$ whose endpoints are visible from $q$.  We say that a chain $\C$ is \emph{independent} if and only if its endpoints are visible points of $\Poly$. Given a chain $\C=\chain(a,b)$ with endpoints $a$ and $b$, let \regionC\ denote the polygon enclosed by the union of \C\ and the segments $qa$ and $qb$ (equivalently, we use the notation $\region(a,b)$).

Given a point $q\in \mathbb{R}^2$ and an independent chain \C\ with endpoints $a$ and $b$, we say that a point $p\in \mathbb{R}^2$ is \emph{visible} from $q$ with respect to \C\ if and only if $p$ is visible from $q$ with respect to \regionC.
 The set of points that are visible with respect to $\C$ is called the \emph{visibility polygon} of $\C$, and is denoted \VisC. We start by observing that both concepts of visibility are equivalent (hence we need not distinguish between them).

\begin{observation}
If $\mathcal C = \chain(a,b)$ is an independent subchain, then \regionC\ is a simple polygon. Moreover, a point $x\in \C$ is visible with respect to $\Poly$ if and only if it is visible with respect to $\C$. 
\end{observation}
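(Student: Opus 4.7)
The plan is to prove the observation in two steps: first that $\regionC$ is a simple polygon, then that visibility with respect to $\Poly$ and with respect to $\C$ coincide on $\C$.

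For the first step, I argue that the boundary $\partial \regionC = \C \cup qa \cup qb$ is a simple closed curve. The chain $\C$ is simple as a sub-arc of $\bd$; the segments $qa$ and $qb$ lie in $\Poly$ because $a$ and $b$ are visible from $q$, and by the weak general position assumption their interiors do not meet $\bd$, so $qa \cap \C = \{a\}$ and $qb \cap \C = \{b\}$. Similarly $qa \cap qb = \{q\}$, since $a \ne b$ are not collinear with $q$. Gluing these three simple arcs along their matching endpoints yields a simple closed Jordan curve, which bounds the simple polygon $\regionC$.

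For the second step, I first note that $\regionC \subseteq \Poly$: the boundary $\partial \regionC$ lies in $\Poly$, and since the bounded region $\regionC$ is adjacent to $\C$ on the same side of $\bd$ as $q$ (which is interior to $\Poly$), the whole of $\regionC$ is contained in $\Poly$. The implication ``visible in $\C$ implies visible in $\Poly$'' is then immediate. Conversely, suppose $qx \subseteq \Poly$ with $x \in \C$; I must show $qx \subseteq \regionC$. The segment $qx$ cannot cross $\C \subseteq \bd$ in its interior, and by general position it meets $qa$ and $qb$ only at $q$, so the open portion $qx \setminus \{q,x\}$ is disjoint from $\partial \regionC$ and must lie wholly inside $\regionC$ or wholly outside. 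Inspecting the initial direction of $qx$ at the vertex $q$, the interior angle of $\regionC$ at $q$ is precisely the wedge between $qa$ and $qb$ that contains $\C$, so $qx$ enters $\regionC$ at $q$ and, with no way to exit, remains inside all the way to $x$. The main obstacle is this last geometric step---formally establishing that $\C$ lies in the correct wedge at $q$---which reduces to the observation that $\C$ is the portion of $\partial \regionC$ not adjacent to $q$ and therefore spans the interior wedge between the rays $qa$ and $qb$ facing the interior of $\regionC$.
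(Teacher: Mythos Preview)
The paper states this as an observation without proof, so there is no argument in the paper to compare against; you are supplying a proof where the authors offer none.

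Your treatment of the first claim (simplicity of $\regionC$) and of the easy direction of the second claim ($\regionC\subseteq\Poly$, hence visibility with respect to $\C$ implies visibility with respect to $\Poly$) is fine.

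The converse direction has a real gap, and you correctly flag it yourself. You want the initial direction of $qx$ at $q$ to point into the interior wedge of $\regionC$, and you justify this by saying that wedge is ``the one that contains $\C$.'' But $\C$ need not lie in a single angular wedge at~$q$: portions of $\C$ can---and often do---lie outside the cone $\coneC$. The paper's own Figure~\ref{fig:RightLeftReflex and ConeDefs} (right) illustrates exactly this, and its caption notes that only the \emph{visible} points of $\C$ are guaranteed to lie inside $\coneC$. So the statement ``$\C$ lies in the correct wedge'' is false for general $x\in\C$, and for visible $x$ it is precisely what you are trying to establish.

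The easiest repair is to argue from the other endpoint. For $x\in\C\setminus\{a,b\}$ we have $x\in\partial\regionC$ but $x\notin\partial\region_{\C'}$ (where $\C'=\chain(b,a)$ is the complementary chain), so a small neighbourhood of $x$ inside $\Poly$ lies in $\regionC$. The open segment $qx\setminus\{q,x\}$ is contained in the interior of $\Poly$, hence near $x$ it lies in the interior of $\regionC$; since it meets neither $qa$ nor $qb$ (two segments sharing only the endpoint $q$ with $qx$), it cannot cross into $\region_{\C'}$ and therefore lies in $\regionC$ throughout. This closes the argument without any claim about the angular position of $\C$.
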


The above observation certifies that visibility within independent chains is well-defined. Since we will only consider independent chains, from now on we omit the ``with respect to $\Poly$" (or to $\C$), and simply say that a point $p$ is visible. 

True to its name, the visibility polygon of an independent chain $\C \subseteq \bd$ can be computed without having knowledge of the remainder of $\Poly$. 

\begin{observation}\label{DivideObs}
Let $\mathcal C = \chain(a,b)$ be an independent chain such that $\C = (a, p_i, \ldots, p_j, b)$ for some $0<i<j<n$. Let $x$ be a visible interior point of $\C$ lying on the edge $p_k p_{k+1}$ for some $i\leq k < j$. If we let $\mathcal C_1 = \{a, p_i, \ldots, p_k, x\}$ and $\mathcal C_2 = \{x, p_{k+1}, \ldots, p_j,b\}$,
then $$\VisC = \mathrm{Vis}_{\mathcal C_1}(q) \cup \mathrm{Vis}_{\mathcal C_2}(q),$$ 
$$\mathrm{Vis}_{\mathcal C_1}(q) \cap \mathrm{Vis}_{\mathcal C_2}(q) = qx.$$
\end{observation}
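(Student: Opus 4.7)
The plan is to argue that the segment $qx$ acts as a diagonal of the simple polygon $\regionC$, partitioning it into the two simple sub-polygons $\regone$ and $\regtwo$; once this decomposition is established, both equalities follow.

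First I would check that $\C_1$ and $\C_2$ are independent chains, so that $\regone$, $\regtwo$, $\mathrm{Vis}_{\mathcal C_1}(q)$, and $\mathrm{Vis}_{\mathcal C_2}(q)$ are well-defined by the preceding definition. Their endpoints are the pairs $(a,x)$ and $(x,b)$; each of these points is visible from $q$ by the hypotheses (independence of $\C$ gives visibility of $a$ and $b$, while $x$ is visible by assumption). Since $x$ is visible from $q$ with respect to $\C$, the segment $qx$ lies in $\regionC$, with $q$ a vertex and $x$ a point on the polygonal boundary of $\regionC$ (on $\C$). By the standard diagonal-splitting argument for simple polygons, $qx$ partitions $\regionC$ into two simple sub-polygons; tracing the boundary of each shows they are precisely $\regone$ (bounded by $qa$, the portion of $\C$ from $a$ to $x$, and $xq$) and, symmetrically, $\regtwo$. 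The weak general position assumption enters here to ensure that $qx$ does not pass through any other vertex of $\C$.

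From this decomposition the equalities follow. For the union: if $p\in\mathrm{Vis}_{\mathcal C_1}(q)$ then $qp\subseteq\regone\subseteq\regionC$, so $p\in\VisC$, and symmetrically for $\C_2$. For the reverse inclusion, observe that any segment $qp$ contained in $\regionC=\regone\cup\regtwo$ cannot cross from one sub-polygon to the other without crossing the shared boundary $qx$; but two distinct segments sharing endpoint $q$ meet only at $q$ itself, so $qp$ lies entirely in $\regone$, entirely in $\regtwo$, or along the line through $qx$ (in which case the visibility of $p$ together with the weak general position assumption force $qp\subseteq qx$, since the ray from $q$ through $x$ exits $\regionC$ at $x$). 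For the intersection identity, $qx$ is trivially contained in both visibility polygons, while any point $p$ visible from both $\regone$ and $\regtwo$ must have $qp\subseteq\regone\cap\regtwo=qx$.

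The main obstacle is the geometric splitting argument itself: most of the work lies in verifying that $qx$ really is a valid diagonal of $\regionC$ and then identifying the two resulting sub-polygons with $\regone$ and $\regtwo$. This requires both the visibility of $x$ from $q$ (so that $qx$ lies in $\regionC$ and meets $\bd$ only at $x$) and the weak general position assumption to rule out degenerate alignments in which $qx$ would graze another vertex of $\C$.
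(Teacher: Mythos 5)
Your proof is correct and rests on the same geometric fact the paper uses, but it is organized differently. The paper's own argument (which in the final version is suppressed --- the statement is left as an unproved Observation) is pointwise: one inclusion is immediate because $qp\subseteq\regionC$ means $qp$ crosses neither subchain, and the harder inclusion $\mathrm{Vis}_{\mathcal C_1}(q)\subseteq\VisC$ is proved by contradiction, taking a hypothetical intersection point $w$ of $qp$ with $\C_2$ and invoking the Jordan curve theorem to show the arc of $\C_2$ from $w$ would have to cross $\C_1$, $qa$, or $qx$. You instead front-load the structural claim that $qx$ splits $\regionC$ into $\regone$ and $\regtwo$ with $\regone\cap\regtwo=qx$, and then derive both identities from a connectedness argument on the segment $qp$; this is cleaner and makes the intersection identity fall out for free rather than as a remark at the end. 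One small imprecision: the weak general position assumption (no two \emph{vertices} of $\Poly$ collinear with $q$) does not prevent $qx$ from passing through a single vertex $v$ of $\C$ strictly between $q$ and $x$ --- this happens exactly when $x$ is the shadow of a reflex vertex $v$, which is a case the algorithms genuinely produce. Then $qx$ is not a diagonal in the strict sense and the two sub-polygons are pinched at $v$; the decomposition and both identities survive (the interiors remain disjoint and the union is still $\regionC$), but your appeal to the ``standard diagonal-splitting argument'' should be weakened accordingly rather than justified by general position.
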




Given a point $p$ on the plane, let $\rho_q(p)$ be the ray emanating from $q$ and passing through $p$.
We define $\theta_q(p)$ as the angle that $\rho_q(p)$ makes with the positive $x$-axis, $0\leq \theta_q(p) < 2\pi$. 
We call $\theta_q(p)$ the \emph{CCW-angle} of $p$.




\begin{figure}[tb!]
\centering
\includegraphics{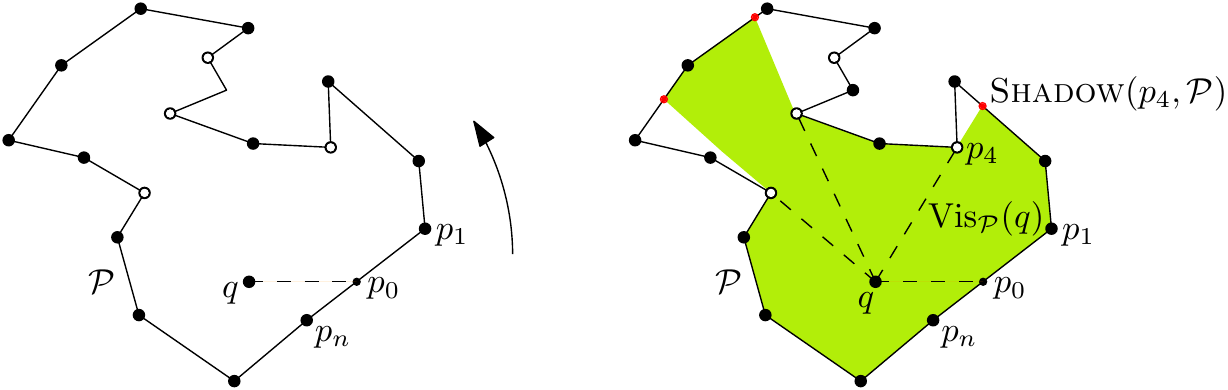}
\caption{Left:	general setting, vertices that are reflex with respect to $q$ are shown with a white point (black otherwise). Right: the visibility polygon $\Vis$.
}
\label{fig:polygon}
\end{figure}

We also need to define what a \emph{reflex} vertex is in our context. Given any vertex $p_k$, the line $\ell_k$ passing through $p_k$ and $q$ splits $\Poly \setminus \ell_k$ into disjoint components. A vertex $p_k$ is \emph{reflex with respect to $q$} if the angle at the vertex interior to $\Poly$ is more than $\pi$ and the vertices $p_{k-1}$ and $p_{k+1}$ lie on the same connected component of $\mathbb{R}^2\setminus \ell_k$ (see Fig. \ref{fig:polygon}). 
Observe that any vertex that is reflex with respect to $q$ is a reflex vertex (in the usual sense), but the converse is not true. Since the point $q$ is fixed, from now on we omit the ``with respect to $q$" term and simply refer to these points as reflex. 
Note that being reflex is a local property that can be verified in $O(1)$ time.
Intuitively speaking, reflex vertices with respect to $q$ are the vertices where important changes occur in the visibility polygon. That is, where the polygon boundary can change between visible or not-visible. Let $\Rin$ be the number of reflex of  vertices of $\Poly$. We also define $\Rout$ as the number of reflex vertices of $\Poly$ that are present in $\Vis$. Naturally, we always have $\Rout\leq \Rin < n$.

Given two points $p$ and $p'$ on a chain $\C=\chain(a,b)$, we say that $p$ lies \emph{before} $p'$ (resp. $p'$ lies \emph{after} $p$) if, when we walk from $a$ towards $b$ along $\C$, we first pass through $p$ and then through $p'$. We say that a chain is {\em visible} if all the points of the chain are visible. A visible chain $\C=\chain(a,b)$ is \emph{CCW-maximal} if no other visible chain starting at $a$ strictly contains ${\mathcal C}$. In this case, we say that $\C$ is the maximal chain {\em starting} at $a$ and {\em ending} at $b$.


Given a visible reflex vertex $v$ on a chain $\C$, we say that a point $w\neq v$ on $\bd$ is the \emph{shadow} of $v$ if $w$ is collinear with $q$ and $v$, and $w$ is visible from $q$ (this point is denoted by $\textsc{Shadow}$($v,\C$)). Due to the general position assumption, $w$ is uniquely defined and  must be an interior point of an edge. That is, each visible reflex vertex is uniquely associated to a shadow point (and vice versa). We say that a visible reflex vertex $v$ is of \emph{type R} (resp. \emph{type L}) if its shadow lies after (resp. before) $v$; see Fig.~\ref{fig:RightLeftReflex and ConeDefs} (left). 
Equivalently, a vertex $v$ is of type R if $q,v,x$ and $q,v,y$ make a \emph{right} turn (where $x$ and $y$ are the predecessor and the successor of $v$ on $\C$, respectively). 
Analogously, vertices of type L make a \emph{left} turns instead. 

A \emph{ray shooting query} is a basic operation that, given an independent chain \C\ and a point $x\in \C$, considers the ray $\rho_q(x)$ and reports the last visible point in $\rho_q(x)$ with respect to \C\ (i.e., the one furthest away from $q$). Observe that when $x$ is a visible reflex vertex we obtain its shadow. We denote the output of this operation by \textsc{RayShooting}$(x, \mathcal{C})$. It is easy to see that a ray shooting query can be performed in linear time, using only $O(1)$ extra variables, by scanning the edges of \C\ one by one and computing their intersections with $\rho_q(x)$. 

Finally, for $\C = \chain(a,b)$  we define $\coneC$ as the cone with apex $q$ that contains every point in the plane with CCW-angle between $\theta_q(a)$ and $\theta_q(b)$; see Fig.~\ref{fig:RightLeftReflex and ConeDefs} (right).


\section{Understanding the visibility polygon}
\label{sec:SimpleAlgorithm}
The basic scheme of our algorithms is to partition the input polygon into independent subchains so that the visibility within one subchain is unaffected by the others. We will use $\bd$ as the starting chain, thus the first chain will be closed, but the following chains will be open. Notice that since \Poly\ is simple, any chain of it will be simple too. 

In this section we present some observations about the independence between chains.

\begin{figure}[tb]
\centering
\includegraphics{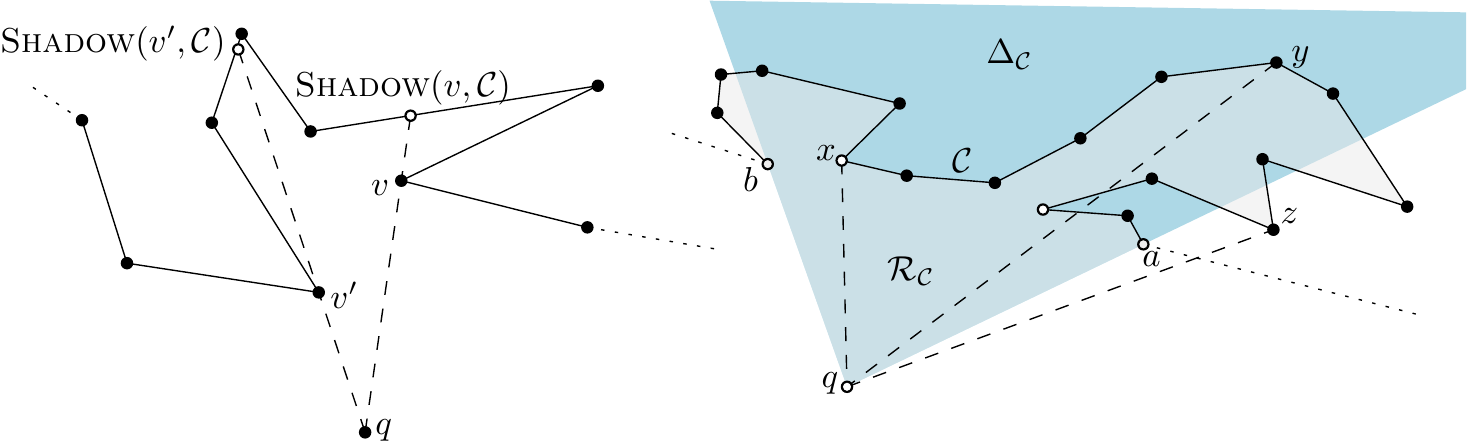}
\caption{\small Left: $v$ is a reflex vertex of type R, while $v'$ is of type L. The chain between $v$ and $\textsc{Shadow}(v,\mathcal C)$, together with the segment joining them, bound a simple polygon containing no visible points other than $v$ and $\textsc{Shadow}(v,\mathcal C)$.  Observe that the chain $\mathcal{C}(\textsc{Shadow}(v,\mathcal C),\textsc{Shadow}(v',\mathcal C))$ is CCW-maximal. 
Right: A polygonal chain $\mathcal C = \chain(a,b)$ and its associated polygon $\region(a,b) = \regionC$. Point $x$ is visible while points $y$ and $z$ are not. 
Note that every visible point of $\C$ lies inside $\coneC$.}
\label{fig:RightLeftReflex and ConeDefs}
\end{figure}

\begin{observation}\label{obs:VisibleChains}
Let $v$ be a visible reflex vertex of $\Poly$ of type R (resp. L) whose shadow is $w$. The $\chain(v, w)$ (resp. $\chain(w, v)$) contains no visible point other than its endpoints. In particular, a visible chain cannot contain an interior reflex vertex.
\end{observation}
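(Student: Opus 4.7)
The plan is to exploit the chord from $v$ to its shadow $w$, which separates $\Poly$ into two sub-polygons, and then to show that $q$ lies on the side opposite to $\chain(v,w)$. I describe the type-R case; the type-L case is symmetric. First, since $v$ is visible and $w$ is its shadow, the ray $\rho_q(v)$ is contained in $\Poly$ from $q$ up to $w$ and meets $\bd$ again for the first time at $w$; hence the open segment $vw$ lies in the interior of $\Poly$ and is disjoint from $\bd$. Therefore $\chain(v,w)\cup vw$ is a simple closed curve enclosing a sub-polygon $R_1\subseteq\Poly$, while $\chain(w,v)\cup vw$ encloses the complementary sub-polygon $R_2$; the two share only the chord $vw$ and $R_1\cup R_2=\Poly$.

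Next I would argue that $q\in R_2$. Let $\phi$ be the direction from $v$ toward $w$, so the direction from $v$ to $q$ is $\phi+\pi$. The type-R hypothesis means that the directions from $v$ to its two neighbors on $\bd$ both lie in the open arc $(\phi-\pi,\phi)$; since $v$ is moreover reflex, the interior of $\Poly$ at $v$ is the large sector bounded by those two edges, and the chord from $v$ splits this sector into a small portion of angular width less than $\pi$ on the $\chain(v,w)$ side and a large portion of angular width greater than $\pi$ containing the direction $\phi+\pi$ on the $\chain(w,v)$ side. So $q$ lies on the $R_2$-side locally at $v$, and since $qv\subseteq\Poly$ meets the chord only at $v$, the whole segment $qv$, and therefore $q$, lies in $R_2$.

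Now take $p$ in the interior of $\chain(v,w)$ and suppose for contradiction that $p$ is visible from $q$. Because $\chain(v,w)\cap\chain(w,v)=\{v,w\}$ and $p\notin vw$, the point $p$ belongs to $\partial R_1\setminus\partial R_2$. If $p$ lies on the line $\ell_v$ through $q$ and $v$, then $p$ would be a second visible point of $\bd$ collinear with $q$ and $v$, violating the uniqueness of the shadow guaranteed by general position. Otherwise the segment $qp$ meets $\ell_v$ only at $q$, which is not on the chord $vw$, so $qp$ does not cross $vw$ and cannot pass from $R_2$ into $R_1$; hence $qp$ must exit $R_2$ through $\chain(w,v)\setminus\{v,w\}\subseteq\bd$, leaving $\Poly$ before reaching $p$, which contradicts visibility.

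Finally, the \emph{in particular} clause follows by applying the main claim to any interior reflex vertex $v$ of a visible chain $\C$: $v$ is visible with some shadow, and the relevant subchain ($\chain(v,w)$ for type R or $\chain(w,v)$ for type L) shares a nontrivial initial portion with $\C$ on one side of $v$, so $\C$ would contain some interior point of that subchain, which must be simultaneously visible (as $\C$ is visible) and non-visible (by the main claim). I expect the delicate step to be the placement of $q$ in $R_2$, since this is where the type-R hypothesis truly enters through a local angular computation; the rest is a standard chord-separation argument.
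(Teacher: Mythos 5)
Your proof is correct and matches the paper's intended reasoning: the paper states this as an Observation without a written proof, but the caption of Fig.~\ref{fig:RightLeftReflex and ConeDefs} reveals that the underlying argument is precisely your chord-separation idea, namely that $\chain(v,w)$ together with the segment $vw$ bounds a simple sub-polygon of $\Poly$ not containing $q$, so any sightline from $q$ to an interior point of that chain would have to cross the chord or leave $\Poly$. Your additional care in placing $q$ on the correct side via the type-R angular condition, and in handling points collinear with $q$ and $v$, fills in the details the paper leaves implicit.
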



The following important lemma characterizes the endpoints of CCW-maximal chains.

\begin{lemma}\label{lem_key}
Let $\C =\chain(a,b)$ be an independent chain, let $p\in \C$ be a visible point and let $v$ the first visible reflex vertex encountered when walking from $p$ towards $b$ (or $b$ if none exists). Let $\chain(p,p')$ be the CCW-maximal chain starting at $p$. The point $p'$ is either equal to $v$ (if $v=b$ or $v$ is of type R), or equal to the shadow of $v$ (if $v$ is of type L).
\end{lemma}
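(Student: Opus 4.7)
The plan is to analyze where visibility can change as we walk CCW along $\chain(p, b)$ starting at the visible point $p$, and then use Observation~\ref{obs:VisibleChains} to pinpoint exactly where the CCW-maximal visible chain terminates.

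First, I would establish the structural fact that, along $\chain(p, b)$, a transition from visible to invisible can only occur either (i) immediately past a visible reflex vertex of type R (where Observation~\ref{obs:VisibleChains} says we enter the invisible interior of $\chain(v, \textsc{Shadow}(v, \C))$), or (ii) immediately past the shadow of a visible reflex vertex of type L (where Observation~\ref{obs:VisibleChains} says we enter the invisible interior of $\chain(\textsc{Shadow}(v, \C), v)$). Since $p$ is visible and $\chain(p, p')$ is the CCW-maximal visible chain starting at $p$, the point $p'$ must be the first such transition point encountered from $p$.

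Next I would proceed by case analysis on $v$, the first visible reflex vertex of $\C$ after $p$ (or $v = b$ when no such vertex exists). \textbf{Case 1} ($v = b$ by convention): no visible reflex vertex lies in $(p, b]$, so none of the above transition mechanisms can fire inside $(p, b]$; hence $\chain(p, b)$ is entirely visible and $p' = b = v$. \textbf{Case 2} ($v$ of type R): no visible reflex vertex lies strictly between $p$ and $v$, so there is no transition in the open chain from $p$ to $v$, and $\chain(p, v)$ is entirely visible; by Observation~\ref{obs:VisibleChains} the chain just past $v$ is invisible, forcing $p' = v$. \textbf{Case 3} ($v$ of type L with shadow $w$): I would first argue that $w$ lies (weakly) after $p$ on $\C$, since otherwise $p$ would be an interior point of $\chain(w, v)$ and hence invisible by Observation~\ref{obs:VisibleChains}, contradicting the hypothesis. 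Then, since no visible reflex vertex lies strictly between $p$ and $v$, and in particular none lies strictly between $p$ and $w$, no intermediate transition occurs, so $\chain(p, w)$ is entirely visible; Observation~\ref{obs:VisibleChains} again forces the chain just past $w$ to be invisible, yielding $p' = w = \textsc{Shadow}(v, \C)$.

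The main obstacle will be justifying the structural fact above, that is, that no \emph{other} visible-to-invisible transition can occur inside the intervals $\chain(p, v)$ in Case 2 or $\chain(p, w)$ in Case 3. The worry is that the shadow of some visible reflex vertex $u$ lying outside the considered interval could still fall inside it and cause a transition. This has to be ruled out by a careful application of Observation~\ref{obs:VisibleChains}: any such configuration would place $p$, $v$, or $w$ in the invisible interior of $\chain(u, \textsc{Shadow}(u, \C))$ or $\chain(\textsc{Shadow}(u, \C), u)$, contradicting their visibility. Once this case analysis is done, the three cases of the lemma follow directly.
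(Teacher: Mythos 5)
Your proof is correct and follows essentially the same route as the paper's: both arguments rest on the fact that visibility can only switch off immediately past a type-R reflex vertex or immediately past the shadow of a type-L one (via Observation~\ref{obs:VisibleChains}), and then identify that first transition point with $v$ or with its shadow. The only cosmetic difference is that you organize the case analysis around the type of $v$ while the paper organizes it around the nature of the endpoint $p'$; the paper likewise asserts the ``only two reasons'' structural fact without further justification, so your treatment matches (and in the type-L case slightly exceeds) its level of detail.
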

\begin{proof}
Clearly, all the points lying after $p$ are visible if and only if $p'=b$.
If $p' \neq b$, then, when walking on $\C$, $p'$ is the last visible point of the chain before a change in visibility occurs (i.e. points at distance $\varepsilon >0$ lying after $p'$ are not visible for sufficiently small values of $\varepsilon$).
This can happen for only two reasons: either $p'$ is a reflex vertex of type R, or there is some reflex vertex $v'$ of type L such that $p'$ is the shadow of $v'$.
In the former case, $p'$ is equal to $v$ since no reflex vertex lies in the interior of $\chain(p,p')$ by Observation~\ref{obs:VisibleChains}.
In the latter case, $v'$ must be the first visible reflex vertex lying after $p$ since no point between $p'$ and $v'$ can be visible by Observation~\ref{obs:VisibleChains}.
\end{proof}

The following result is a direct consequence of Lemma~\ref{lem_key} that allows us to report $\VisC$ of an independent chain $\C$ with no interior reflex vertices. 

\begin{corollary}\label{corollary:CharacterizationVisibleReflex}
Let $p \in \bd$ be a visible point such that $p$ is not a reflex vertex of type R. 
Let $v$ be the first visible reflex vertex lying after $p$ and let $w$ be its shadow. The following statements hold:
\begin{itemize}
\vspace{-1mm}\item[-] If $v$ is of type R, then $Chain(p, v)$ is CCW-maximal.
\vspace{-1mm}\item[-] If $v$ is of type L, then $Chain(p, w)$ is CCW-maximal.
\end{itemize}
\end{corollary}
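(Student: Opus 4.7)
The plan is to derive the corollary as a direct specialization of Lemma~\ref{lem_key}. Pick any independent chain $\C$ that contains $p$ and extends strictly past the vertex $v$ and its shadow $w$; a convenient choice is $\C = \bd$ itself, viewed as the closed chain $\chain(p_0,p_0)$, whose (coinciding) endpoints are visible because $p_0$ is visible by construction. Then the lemma applies to $\C$ and $p$.

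Let $\chain(p,p')$ denote the CCW-maximal visible chain starting at $p$. Because $p$ is visible and is \emph{not} a reflex vertex of type R, Observation~\ref{obs:VisibleChains} implies that a sufficiently small CCW-neighborhood of $p$ on $\bd$ remains visible, so the maximal chain is non-degenerate ($p' \neq p$). Consequently the ``first visible reflex vertex encountered when walking from $p$ towards $b$'' featured in Lemma~\ref{lem_key} coincides with the vertex $v$ specified in the corollary. Since $v$ exists by hypothesis, the degenerate alternative $v = b$ in the lemma does not arise (we chose $\C$ long enough so that $v$ appears strictly before the endpoint of $\C$).

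Applying Lemma~\ref{lem_key} now delivers the two advertised cases verbatim: if $v$ is of type R then $p' = v$, so $\chain(p,v)$ is CCW-maximal; if $v$ is of type L then $p'$ equals the shadow $w$ of $v$, so $\chain(p,w)$ is CCW-maximal. The only subtle point, and the reason the hypothesis ``$p$ is not a reflex vertex of type R'' is imposed, is precisely to exclude the situation in which, by Observation~\ref{obs:VisibleChains}, visibility would fail immediately after $p$ and the maximal chain would collapse to the singleton $\{p\}$, invalidating the identification of $p'$ with $v$ or $w$. I do not expect any genuine obstacle beyond this bookkeeping; the content of the corollary is carried entirely by Lemma~\ref{lem_key}.
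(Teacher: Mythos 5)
Your proof is correct and takes the same route as the paper, which offers no separate argument and simply presents the corollary as a direct consequence of Lemma~\ref{lem_key} applied to an independent chain containing $p$ (e.g.\ $\bd=\chain(p_0,p_0)$). One minor caveat: your non-degeneracy claim drawn from Observation~\ref{obs:VisibleChains} is not quite accurate when $p$ happens to be the shadow of a type~L reflex vertex (there the maximal chain does collapse to $\{p\}=\chain(p,w)$, and the corollary still holds), but this is harmless since the lemma's ``first visible reflex vertex encountered when walking from $p$'' coincides with the corollary's $v$ by definition, not by the non-degeneracy of the maximal chain.
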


\section{Output Sensitive Algorithm}\label{section:SequentialAlgorithms}
In this section we present an algorithm that will be used as stepping stone for our divide and conquer algorithm presented in  Section~\ref{Section:Divide and conquer}.
This base algorithm computes the visibility polygon of an independent chain using $O(1)$ extra variables.

For this purpose, we introduce an operation that we call $\textsc{NextVisReflex}$. 
This operation receives an independent chain $\C=\chain(a,b)$ and a visible point $p$ on $\C$. 
Its objective is to compute the next \emph{visible} reflex vertex lying after $p$ on \C, along with its shadow. That is, let $v$ be the first reflex vertex lying after $p$ on $\C$. If $v$ is visible, then $\textsc{NextVisReflex}(p,\C)$ should return $v$ and its shadow. Otherwise, we know that at some point when walking along the path from $p$ to $v$ we change from a visible to a non-visible region. By Lemma~\ref{lem_key}, this change occurs at the shadow of some visible reflex vertex. In this case,  $\textsc{NextVisReflex}(p,\C)$ should return this reflex vertex (as well as its shadow). For well-definedness purposes, we say that $\textsc{NextVisReflex}(p,\C)$ should return $b$ if $\C$ contains no reflex vertex.

The following observation (illustrated in Fig.~\ref{fig:ImprovedAlgorithm}) allows us to compute $\textsc{NextVisReflex}(p,\C)$ efficiently. 

\begin{figure}[tb]
  \centering
 \includegraphics{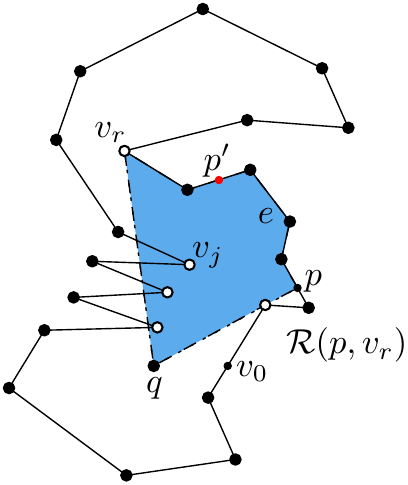}
\caption {\small When the first reflex vertex $v$ lying after $p$ is not visible, there must exist a visible reflex vertex in $\chain(p,b)$ angularly located between $p$ and $v$. The one with smallest CCW-angle inside $\region(p,v)$ ($v'$ in the figure) will determine the change between visible and non-visible regions between $p$ and $v$.}
\label{fig:ImprovedAlgorithm}
 \end{figure}



\begin{observation}\label{obs:SmallestAngle}
For any independent chain $\mathcal{C}=\chain(a,b)$, and visible point $p\in \C$ that is not an R-type reflex vertex, let $v$ be the first reflex vertex encountered when walking from $p$ towards $b$ on $\C$ (or $b$ if none exists). If $v$ is not visible, then the CCW-maximal chain starting at $p$ ends at the shadow of the L-type reflex vertex in $\region(p,v)$ with smallest CCW-angle.
\end{observation}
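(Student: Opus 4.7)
My plan is to combine Lemma~\ref{lem_key} with a careful analysis of the geometry of $\region(p,v)$. Let $v^{\star}$ denote the first visible reflex vertex encountered on $\C$ after $p$; by Lemma~\ref{lem_key}, the CCW-maximal chain starting at $p$ either ends at $b$, at $v^{\star}$ itself, or at its shadow $w^{\star} = \textsc{Shadow}(v^{\star}, \C)$. The first case is impossible since it would make $v$ visible. If $v^{\star}$ were of type R then by Corollary~\ref{corollary:CharacterizationVisibleReflex} the chain $\chain(p, v^{\star})$ would be entirely visible; but because $v$ is the first reflex after $p$ and $v \neq v^{\star}$ (they differ in visibility), $v$ would lie in this chain, contradicting its non-visibility. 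Hence $v^{\star}$ is of type L and the CCW-maximal chain ends at $w^{\star}$.

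Next, I would show $v^{\star} \in \region(p,v)$. Since the CCW-maximal chain extends up to $w^{\star}$ and $v$ is not visible, $w^{\star}$ must lie in the interior of $\chain(p,v)$. Being L-type, $v^{\star}$ lies strictly between $q$ and $w^{\star}$ on the segment $qw^{\star}$. That segment stays inside $\region(p,v)$: it cannot cross $\chain(p, v)$ other than at $w^{\star}$ (because $w^{\star}$ is visible with respect to $\C$), and by general position it shares with the chords $qp$ and $qv$ only the vertex $q$; combined with the fact that it leaves $q$ at an angle strictly between $\theta_q(p)$ and $\theta_q(v)$, we conclude $qw^{\star} \subseteq \region(p,v)$, and thus $v^{\star} \in \region(p,v)$.

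Finally, I would verify that $v^{\star}$ minimizes the CCW-angle among L-type reflex vertices in $\region(p,v)$. The key ingredient is the monotonicity of $\theta_q$ along $\chain(p,v)$: since $v$ is the first reflex vertex with respect to $q$ after $p$, no interior vertex of $\chain(p, v)$ reverses the direction of $\theta_q$, so it varies monotonically from $\theta_q(p)$ to $\theta_q(v)$. Suppose for contradiction that $u$ is another L-type reflex vertex in $\region(p,v)$ with $\theta_q(u) < \theta_q(v^{\star})$. The ray from $q$ at angle $\theta_q(u)$ would exit $\region(p,v)$ at some point $w_u \in \chain(p,v)$, and by monotonicity $w_u$ strictly precedes $w^{\star}$ along the chain. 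Since every point of $\chain(p, w^{\star})$ is visible, both $w_u$ and the full segment $qw_u$ lie in $\regionC$, so $u \in qw_u$ is visible and $w_u$ is its shadow. Walking from $p$ along $\C$ we would then hit the shadow of the visible L-type reflex $u$ before reaching $w^{\star}$, forcing the CCW-maximal chain to end at $w_u$ instead of $w^{\star}$, a contradiction.

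The main obstacle will be the geometric bookkeeping around $\region(p,v)$. Because $v$ is not visible, the chord $qv$ leaves $\regionC$, so one must argue separately that segments such as $qw^{\star}$ and $qw_u$ actually stay inside $\region(p,v)$, rather than appealing to a blanket inclusion $\region(p,v) \subseteq \regionC$. The monotonicity of $\theta_q$ along $\chain(p,v)$ is the other delicate ingredient, but it follows cleanly from $v$ being the first reflex vertex encountered after $p$.
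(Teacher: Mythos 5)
The paper states this as an observation and gives no proof, so there is nothing to compare routes against; I can only judge your argument on its own terms. Your first step is correct and is the right reduction: by Lemma~\ref{lem_key}, the CCW-maximal chain starting at $p$ ends at the shadow $w^{\star}$ of the first \emph{visible} reflex vertex $v^{\star}$ after $p$, and $v^{\star}$ must be of type L, since otherwise Corollary~\ref{corollary:CharacterizationVisibleReflex} would make $\chain(p,v^{\star})$ entirely visible while it contains the non-visible vertex $v$ in its interior. What remains is to show that $v^{\star}$ is the L-type reflex vertex in $\region(p,v)$ of smallest CCW-angle, and here your declared key ingredient fails.

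The claim that $\theta_q$ is monotone along $\chain(p,v)$ because no interior vertex of that chain is reflex with respect to $q$ is false. Under the paper's definition, $p_k$ is reflex with respect to $q$ only if, in addition to $p_{k-1}$ and $p_{k+1}$ lying on the same side of $\ell_k$, the interior angle at $p_k$ exceeds $\pi$. A \emph{convex} vertex whose two neighbours lie on the same side of $\ell_k$ also reverses the direction of $\theta_q$, yet is not reflex with respect to $q$. Such a vertex can never be visible from $q$ (the direction towards $q$ lies outside its interior wedge), so it cannot occur on the visible prefix $\chain(p,w^{\star})$ --- that prefix genuinely is angularly monotone --- but nothing prevents it from occurring on the hidden portion $\chain(w^{\star},v)$, which lies inside the pocket of $v^{\star}$ and may oscillate angularly. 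Since you invoke monotonicity over all of $\chain(p,v)$ both to place $\theta_q(w^{\star})$ and $\theta_q(w_u)$ strictly between $\theta_q(p)$ and $\theta_q(v)$ and to conclude that $w_u$ precedes $w^{\star}$ (and that $u$ lies between $q$ and $w_u$ rather than beyond the chain), this is a genuine gap rather than mere bookkeeping. The natural repair is to shift the argument from $\chain(p,v)$ to the visible prefix: $\chain(p,w^{\star})$ is visible, hence angularly monotone, so $\region(p,w^{\star})$ is a star-shaped (from $q$) subset of $\Poly$ whose interior contains no point of $\bd$ and therefore no vertex of $\Poly$; one then argues that every point of $\region(p,v)$ with CCW-angle below $\theta_q(v^{\star})$ lies in $\region(p,w^{\star})$ (using that $\bd$ can enter or leave $\region(p,v)$ only across the segment $qv$), so no L-type reflex vertex of $\region(p,v)$ can have a smaller CCW-angle than $v^{\star}$, which itself sits on $qw^{\star}$ and hence in the region.
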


\begin{lemma}\label{lem_next}
For any independent chain $\C$ of $n$ vertices and a visible point $p\in \C$, \linebreak$\textsc{NextVisReflex}(p,\C)$ can be computed in $O(n)$ time using $O(1)$ additional variables.
\end{lemma}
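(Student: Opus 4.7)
The plan is to implement $\textsc{NextVisReflex}(p,\C)$ via a constant number of linear passes over $\C$, each running in $O(n)$ time with $O(1)$ auxiliary variables. Correctness will rest entirely on Observation~\ref{obs:SmallestAngle}, so only the execution details need to be verified.

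First I would walk from $p$ towards $b$ along $\C$, vertex by vertex, maintaining just the current vertex and its predecessor, until I reach the first reflex vertex $v$. Being reflex is a local property, so the test at each step is $O(1)$. If no reflex vertex is encountered, I return $b$ as specified. Next I decide whether $v$ is visible: this can be done by a single $\textsc{RayShooting}(v,\mathcal{C})$ call, which the preliminaries already certify runs in $O(n)$ time and $O(1)$ space; equivalently, one scans the edges of $\chain(p,v)$ and checks whether any of them intersects the open segment $qv$. If $v$ is visible I compute its shadow with one more ray-shooting query along $\rho_q(v)$ over the portion of $\C$ lying after $v$, and return $(v,\textsc{Shadow}(v,\C))$.

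If $v$ is not visible, Observation~\ref{obs:SmallestAngle} tells me that the correct output is the shadow of the L-type reflex vertex $u^*$ of $\region(p,v)$ whose CCW-angle $\theta_q(u^*)$ is smallest. To find $u^*$ I perform one more linear pass: at every vertex I use the constant-time local test to see whether it is an L-type reflex vertex, together with the check $\theta_q(\cdot)\in(\theta_q(p),\theta_q(v))$ so that the vertex lies inside the cone bounded by the rays $qp$ and $qv$; among the candidates I retain the one minimizing $\theta_q(\cdot)$. A final $\textsc{RayShooting}(u^*,\C)$ then produces its shadow, and I return $(u^*,\textsc{Shadow}(u^*,\C))$.

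Each of the above scans, as well as each ray-shooting call, takes $O(n)$ time and uses only $O(1)$ additional words (the current vertex, the candidates $v$ and $u^*$, and a few angle values), so the overall procedure meets the claimed bounds. The only delicate point I anticipate is the angular bookkeeping: one has to respect the weak general-position hypothesis, handle the $2\pi$ wrap-around of $\theta_q(\cdot)$, and observe that every L-type reflex vertex of $\region(p,v)$ has CCW-angle in the interval $(\theta_q(p),\theta_q(v))$ because $\region(p,v)$ is contained in the cone at $q$ spanned by the rays through $p$ and $v$. Once this is settled, the algorithm is a direct translation of Observation~\ref{obs:SmallestAngle} into linear scans.
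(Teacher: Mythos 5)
Your overall skeleton (find the first reflex vertex $v$, test its visibility by ray shooting, and otherwise fall back on Observation~\ref{obs:SmallestAngle}) matches the paper's proof, but there is a genuine gap in the step where you locate the candidate vertex $u^*$. Observation~\ref{obs:SmallestAngle} asks for the L-type reflex vertex of smallest CCW-angle \emph{in $\region(p,v)$}, and you replace membership in $\region(p,v)$ by the angular test $\theta_q(\cdot)\in(\theta_q(p),\theta_q(v))$. These are not equivalent: the cone spanned by $\rho_q(p)$ and $\rho_q(v)$ is an unbounded superset of $\region(p,v)$, and the chain $\chain(v,b)$ can re-enter that cone on the far side of $\chain(p,v)$ (by crossing $\rho_q(v)$ beyond $v$ or $\rho_q(p)$ beyond $p$ --- think of a spiral polygon whose boundary sweeps the same angular directions several times at increasing distances from $q$). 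Such a winding can contain an L-type reflex vertex that passes your angular test, lies outside $\region(p,v)$ (hence is occluded and irrelevant), and has a smaller CCW-angle than the correct vertex; your minimization would then return it, and the subsequent ray-shooting call would not even produce its shadow, since the vertex is not visible. This is exactly the difficulty the paper's proof is engineered around: it walks along $\chain(p,b)$ maintaining an in/out parity bit for $\region(p,v)$, toggled each time the segment $qv$ is crossed, and only admits L-type reflex vertices encountered while inside. The justification that this single test suffices --- the chain cannot cross $qp$ because $p$ is visible, and cannot cross $\chain(p,v)$ because $\C$ is simple, so $qv$ is the only portal --- is the missing ingredient in your argument, and it is what makes the $O(n)$-time, $O(1)$-space implementation correct rather than merely fast.

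A secondary omission: Observation~\ref{obs:SmallestAngle} carries the hypothesis that $p$ is not an R-type reflex vertex, and the lemma you are proving allows $p$ to be any visible point. The paper discharges this by first computing $\textsc{Shadow}(p,\C)$ when $p$ is R-type and restarting from there; your write-up never addresses this case, so as stated your procedure can invoke the observation outside its range of validity.
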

\begin{proof}
Let $\mathcal{C}=\chain(a,b)$ and let $v$ be the first reflex vertex lying after $p$ on $\C$.
In $O(n)$ time we can perform a ray shooting query: if $v$ is visible then we output it (and its shadow) as the result of $\textsc{NextVisReflex}(p,\C)$. Otherwise, we use Observation~\ref{obs:SmallestAngle} and compute the reflex vertex on $\chain(p,b)$ with smallest CCW-angle (among those that are in $\region(p,v)$). This vertex is found by walking along $\chain(p,b)$ and keeping track of every time we enter or leave $\region(p,v)$. Note that since $p$ is visible and $\C$ is simple, we can only enter or leave $\region(p,v)$ when we cross line segment $qv$, hence this can be checked in constant time per edge of \C. Since a constant number of operations is needed per vertex, at most $O(n)$ time will be needed for computing $\textsc{NextVisReflex}(p,\C)$. 

Finally note that Observation~\ref{obs:SmallestAngle}  holds whenever $p$ is not an R-type reflex vertex. Thus, if $p$ is a reflex vertex of type R, it suffices to first compute its shadow, and  return the same as  $\textsc{NextVisReflex}(\textsc{Shadow}(p,\C),\C)$ would.
\end{proof}

Given an  independent chain $\C = (c_1, \ldots, c_n )$, our base algorithm works as follows: start from $c_1$, use $\textsc{NextVisReflex}$ to obtain the next visible reflex vertex, and report the CCW-maximal chain starting from $c_1$. We then repeat the procedure starting from the last reported vertex until we reach $c_n$ (see the details in Algorithm \ref{alg:subroutine}). 

\begin{theorem}\label{theo:Sequential algorithm}
Algorithm~\ref{alg:subroutine} reports the visibility polygon of an independent chain of $n$ vertices and $\overline{r}$ visible reflex vertices in counterclockwise order in $O(n\overline{r})$ time, using $O(1)$ additional variables.
\end{theorem}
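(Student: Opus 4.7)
The plan is to prove correctness and complexity separately, both following the structure of the loop in Algorithm~\ref{alg:subroutine}. For correctness, I would argue by induction on the iteration count that after the $k$-th iteration the algorithm has correctly reported $\mathrm{Vis}_{\chain(c_1,p_k)}$ in CCW order, where $p_k$ is the current ``cursor'' point maintained by the algorithm. The base case $k=0$ is trivial since $p_0=c_1$. For the inductive step, one call to $\textsc{NextVisReflex}(p_k,\C)$ returns a visible reflex vertex $v$ (together with its shadow $w$), and by Corollary~\ref{corollary:CharacterizationVisibleReflex} the CCW-maximal chain starting at $p_k$ ends at $v$ if $v$ is of type R (or $v=c_n$) and at $w$ if $v$ is of type L. Reporting this chain extends the visibility polygon correctly, and Observation~\ref{DivideObs} guarantees that the visibility polygon assembled from the concatenation of these CCW-maximal pieces is exactly $\VisC$. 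The new cursor $p_{k+1}$ is set to $v$ or $w$ accordingly, so the invariant is maintained and the whole $\VisC$ is produced in CCW order.

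For the running time, the key observation is that each iteration introduces a new visible reflex vertex into the output. Indeed, whenever $\textsc{NextVisReflex}(p_k,\C)$ returns a reflex vertex $v$, $v$ is a visible reflex vertex of $\C$ that has not been seen before (the cursor only moves forward along $\C$, and by Observation~\ref{obs:VisibleChains} no visible reflex vertex lies strictly inside the just-reported CCW-maximal chain). Thus the number of iterations is at most $\overline{r}+1$, since after encountering the last visible reflex vertex at most one more iteration is needed to reach $c_n$. By Lemma~\ref{lem_next}, each call to $\textsc{NextVisReflex}$ costs $O(n)$ time and uses $O(1)$ additional variables; similarly, reporting the vertices of the CCW-maximal chain between $p_k$ and the new cursor requires only a linear scan of the relevant portion of $\C$. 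Summing over all iterations gives $O(n\overline{r})$ total time.

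For the workspace bound, one notes that the algorithm only needs to store $p_k$, the output of the last call to $\textsc{NextVisReflex}$, and the usual loop counters; $\textsc{NextVisReflex}$ itself works with $O(1)$ extra variables by Lemma~\ref{lem_next}, so the overall workspace is $O(1)$.

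The main obstacle I expect is not the complexity analysis but the bookkeeping needed to show that the CCW-maximal chains from successive iterations fit together without overlap and without skipping visible portions—in particular, making sure the transition at a type-L reflex vertex (where the cursor jumps from $p_k$ to the shadow $w$, past the invisible region bounded by $v$ and $w$) is handled by Observation~\ref{DivideObs} applied at the visible point $w$, and that the next call $\textsc{NextVisReflex}(w,\C)$ is well-defined since $w$ is visible. Once this is established, the rest of the proof is a straightforward combination of the preceding lemmas.
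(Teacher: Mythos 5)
Your proposal is correct and follows essentially the same route as the paper: correctness via Lemma~\ref{lem_next}, Corollary~\ref{corollary:CharacterizationVisibleReflex} and the gluing guaranteed by Observation~\ref{DivideObs}, and the running time by charging each $O(n)$-time call to $\textsc{NextVisReflex}$ to the distinct visible reflex vertex it produces, giving at most $\overline{r}+1$ iterations. The paper's proof is just a terser version of the same charging argument, so no substantive difference.
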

\begin{proof}
Correctness of the algorithm is given by Lemma \ref{lem_next} and Corollary~\ref{corollary:CharacterizationVisibleReflex}.  It is easy to verify that Algorithm \ref{alg:subroutine} uses a constant number of variables, hence it remains to show a bound on the running time. Notice that at each iteration of the algorithm we report a visible reflex vertex. Hence, we can charge the cost of $\textsc{NextVisReflex}$ operation to the reported vertex. Since no vertex is reported twice and operation $\textsc{NextVisReflex}$ needs linear time, the total running time is bounded by $O(n\overline{r})$. 
\end{proof}

\begin{algorithm}
  \begin{algorithmic}[1]  
	\STATE $c_{\mathrm{start}} \leftarrow$ $c_1$ (or $c_{\mathrm{start}}\gets \textsc{Shadow}(c_1,\C)$ if $c_1$ is an R-type reflex vertex)
	\REPEAT
		\STATE $v\gets  \textsc{NextVisReflex}(c_{\mathrm{start}},\C)$
		\IF{$v=c_n$}
			\STATE  (*  The remainder of the chain is visible *)
			\STATE $c_{\mathrm{stop}} \gets c_n$
			\STATE $c_{\mathrm{next}}\gets c_n$
		\ELSE 
		        \STATE (* We found next visible reflex $v$. The reported chain will depend on the type of $v$ *)
			\IF{$v$ is of type R}
				\STATE $c_{\mathrm{stop}} \gets v$
				\STATE $c_{\mathrm{next}}\gets \textsc{Shadow}(v,\C)$
			\ELSE
				\STATE $c_{\mathrm{stop}} \gets \textsc{Shadow}(v,\C)$
				\STATE $c_{\mathrm{next}}\gets v$
			\ENDIF
		\ENDIF
    \STATE Report every vertex between $c_{\mathrm{start}}$ and $c_{\mathrm{stop}}$
    \STATE $c_{\mathrm{start}}\gets c_{\mathrm{next}}$
    \UNTIL{$c_{\mathrm{start}} = c_n$}
  \end{algorithmic}
\caption{Computing the visibility polygon of an independent chain $\C=(c_1,\ldots c_n)$}
\label{alg:subroutine}
\end{algorithm}

\section{A divide-and-conquer approach}\label{Section:Divide and conquer}

We now consider the case in which we are allowed a slightly larger amount of variables. We parametrize the running time of our algorithms by the number of working variables allowed, which we denote by $s$. Our aim is to obtain an algorithm whose running time decreases as $s$ grows. 

Using the result of the previous section as base algorithm, we now present a divide-and-conquer approach to solve the problem. The general scheme of our algorithm is the natural one: choose a reflex vertex $z$ inside the cone $\coneC$, perform a ray shooting query to find the visible point in the direction of $z$, and split the polygonal chain into two smaller independent subchains $\mathcal C_1, \mathcal C_2$ (see Fig.~\ref{fig:OutlineAlgorithm}, left). We repeat the process recursively, until either 
$(1)$ a chain $\mathcal C$ has a constant number of reflex vertices (see Fig.~\ref{fig:OutlineAlgorithm}, right) or 
$(2)$ the depth of the recursion is such that we would exceed the number of allowed working variables. 
Whenever either of these two conditions is met, we compute the visibility polygon of the chain using Algorithm \ref{alg:subroutine}. See a scheme of this approach in Algorithm~\ref{alg:DivideAndConquer}.


\begin{figure}[tb]
\centering
\includegraphics{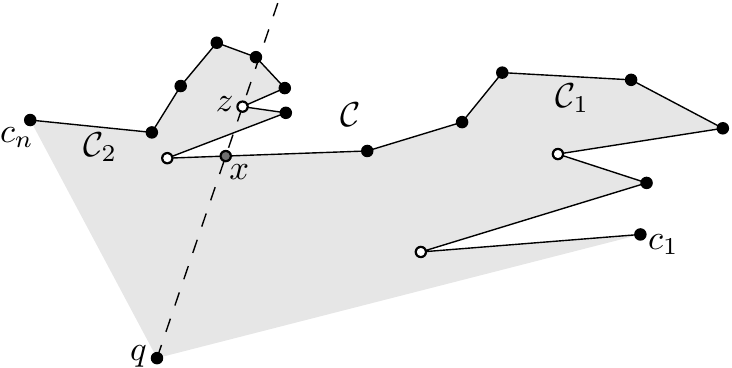}
\includegraphics{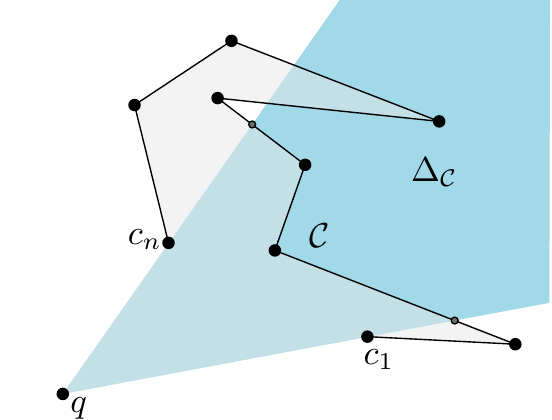}
\caption{Left: Split of $\mathcal C$ into two subchains $\mathcal C_1, \mathcal C_2$ using a visible point $x$ in the direction of a reflex vertex $z$.
Right: A polygonal chain $\mathcal C$ with no reflex vertices inside the cone $\coneC$, only one subchain of $\mathcal C$ is visible. }
\label{fig:OutlineAlgorithm}
\end{figure}

\begin{algorithm}
\caption{Given a polygonal chain $\mathcal C = (c_1, \ldots, c_n)$ such that $c_1, c_n$ are both visible points of $\mathcal P$ and a positive integer depth $d$ (initially $1$), compute $\VisC$}
\label{alg:DivideAndConquer}
\begin{algorithmic}[1]
\STATE $k\gets$ number of reflex vertices of $\mathcal C$ inside the cone $\coneC$
\IF{$k \leq 2$ or $d\geq h(s)$}
	\STATE\label{algoconst} Run Algorithm~\ref{alg:subroutine} on $\C$
\ELSE
	\STATE $v\leftarrow \textsc{findPartitionVertex}(\mathcal C)$
	\STATE $x\gets \textsc{RayShooting}(v,\C)$
	\STATE Algorithm~\ref{alg:DivideAndConquer}$ (\{c_1, \ldots, x\},d+1)$
	\STATE Algorithm~\ref{alg:DivideAndConquer}$ (\{x, \ldots, c_n\},d+1)$
\ENDIF
\end{algorithmic}
\end{algorithm}

In order to control the depth of the recursion we use a depth counter, hence the algorithm stops dividing once $d=h(s)$ (for some value $h(s)\in O(s)$ that will be determined later). Note that the split direction is decided by subroutine $\textsc{FindPartitionVertex}(\mathcal C)$. This procedure should give a direction so that the resulting subchains ${\mathcal C_1}$ and ${\mathcal C_2}$ have roughly the same complexity. 
 Naturally, it must also run reasonably fast and use $O(s)$ variables.

In this section we propose two different methods to choose the partition vertex. The first one uses the approximate median finding algorithm of  Raman and Ramnath~\cite{rr-iubtstsls-98}. The second one is randomized, and simply chooses a random reflex vertex among those lying in the cone $\coneC$. We first show that, regardless of the partition method used, the visibility polygon will be correctly computed. 

\begin{lemma}\label{lem_correctdiv}
Algorithm~\ref{alg:DivideAndConquer} correctly reports the visibility polygon of an independent chain in counterclockwise order, using $O(s)$ variables.
\end{lemma}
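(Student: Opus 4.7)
The plan is to prove both correctness and the space bound by induction on the recursion depth $d$, with the base case handled by Theorem~\ref{theo:Sequential algorithm} and the inductive step essentially falling out of Observation~\ref{DivideObs}. I first note that whenever Algorithm~\ref{alg:DivideAndConquer} is invoked on a chain $\C$, that chain is independent: this is true for the initial call by hypothesis, and in any recursive call the chain is of the form $(c_1,\ldots,x)$ or $(x,\ldots,c_n)$, where $x=\textsc{RayShooting}(v,\C)$ is by definition a visible point of $\C$ (and hence of $\Poly$), while $c_1$ and $c_n$ are visible by the inductive hypothesis on the parent call.

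For the base case of the induction, the algorithm invokes Algorithm~\ref{alg:subroutine} on an independent chain, which by Theorem~\ref{theo:Sequential algorithm} correctly reports its visibility polygon in counterclockwise order using $O(1)$ extra variables. For the inductive step, the algorithm picks a reflex vertex $v\in\coneC$ via $\textsc{findPartitionVertex}$, computes $x=\textsc{RayShooting}(v,\C)$, and recurses on $\C_1=(c_1,\ldots,x)$ and $\C_2=(x,\ldots,c_n)$. Since $x$ is visible, Observation~\ref{DivideObs} applies and gives
\[
\VisC \;=\; \mathrm{Vis}_{\C_1}(q) \,\cup\, \mathrm{Vis}_{\C_2}(q),
\]
with the two sets sharing only the segment $qx$. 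By the induction hypothesis, the two recursive calls correctly report $\mathrm{Vis}_{\C_1}(q)$ and $\mathrm{Vis}_{\C_2}(q)$ in counterclockwise order. Because $\C_1$ precedes $\C_2$ along $\C$, and the algorithm processes $\C_1$ first, the concatenated output of the two calls is the correct CCW enumeration of $\VisC$ (the shared segment $qx$ appears at the junction and causes no duplication beyond what Observation~\ref{DivideObs} already predicts).

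For the space bound I would argue level by level down the recursion tree. Each invocation needs $O(1)$ local state: a constant number of indices to represent the two endpoints of the current chain (passed by reference into the read-only input), the depth counter $d$, the partition vertex $v$, the intersection point $x$ (which by the discussion in Section~\ref{sec:Preliminaries} is representable by four vertex indices of $\Poly$), and whatever workspace the partition-finding subroutine requires. Assuming $\textsc{findPartitionVertex}$ itself uses $O(s)$ variables and, critically, releases them before recursing, the only state that persists across recursive calls is the $O(1)$ data per active stack frame. The guard $d\ge h(s)$ caps the recursion depth at $h(s)\in O(s)$, so the total workspace is $O(s)$, as required.

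The main obstacle in this argument is not the correctness, which is essentially immediate from Observation~\ref{DivideObs}, but the space accounting: one has to be careful that neither the representation of the subchain endpoints nor the intermediate state of $\textsc{findPartitionVertex}$ inflates the per-level cost, and that this auxiliary space is reclaimed before the recursive calls fire so that the $O(1)$-per-level $\times$ $O(s)$-depth bound really holds. This is a standard but load-bearing observation; once the partition subroutine is specified in the subsequent sections (the approximate-median and randomized variants), one must revisit this lemma only to confirm that each such subroutine respects the $O(s)$ budget on its own.
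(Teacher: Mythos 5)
Your proof is correct and follows essentially the same route as the paper's: correctness via Observation~\ref{DivideObs} applied at each split together with the correctness of Algorithm~\ref{alg:subroutine} at the leaves, and the space bound from $O(1)$ state per recursion level times the $h(s)\in O(s)$ depth cap plus the reusable workspace of the subroutines. Your version is somewhat more explicit than the paper's (you verify that the subchains stay independent and that concatenation preserves the counterclockwise order, which the paper leaves implicit), but the underlying argument is the same.
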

\begin{proof}
The divide-and-conquer approach repeatedly partitions the input into independent chains. Each subchain will eventually be reported. By Observation~\ref{DivideObs}, the union of reported vertices is equal to the visibility polygon, hence correctness is derived from the correctness of  Algorithm~\ref{alg:subroutine}.



Regarding space, the subroutines called in this algorithm (\textsc{findPartitionVertex} and \linebreak\textsc{RayShooting}) use $O(s)$ and $O(1)$ variables, respectively. Once the procedure finishes, their working space can be reused for further calls, hence we never use more than $O(s)$ working space at the same time. It remains to consider the memory used implicitly for handling the recursion. Since each step of the algorithm needs a constant number of variables, the total memory needed will be proportional to the recursion depth. Since $h(s)\in O(s)$, the claim is shown.
\end{proof}

In what follows we present two implementations of $\textsc{FindPartitionVertex}$. 

\subsection{Deterministic variant}\label{secdeter}

We start by giving a deterministic algorithm for $\textsc{FindPartitionVertex}(\mathcal C)$ using $O(s)$ extra variables. For this purpose, we will use the algorithms by Raman and Ramnath presented in~\cite{rr-iubtstsls-98}, which compute an approximation of the median value of a given set using reduced workspace.

Given a chain $\C$, let $\Theta = \{\theta_q(v_i) : v_i\text{ is a reflex vertex of $\C$ lying inside }\coneC\}$. An element $\theta_q(v_i)\in \Theta$ is called a \emph{$2/3$-median of $\Theta$} if there are at most $2|\Theta|/3$ elements in $\Theta$ smaller that $\theta_q(v_i)$ and at most $2|\Theta|/3$ greater that $\theta_q(v_i)$.
Given two elements $z,z'$ of $\Theta$ such that $z<z'$, we say that $z,z'$ is an \emph{approximate median pair} if at most $|\Theta|/2$ elements of $\Theta$ are smaller than $z$, at most $|\Theta|/2$ lie between $z$ and $z'$, and at most $|\Theta|/2$ elements are greater that $z'$.
Notice that if $z,z'$ is an approximate median pair, then either $z$ or $z'$ is a $2/3$-median of $\Theta$.
Moreover, we can determine which of the two is a $2/3$-median with one scan of $\C$. Thus, we say that every approximate median pair induces a $2/3$-median of $\Theta$.

Raman and Ramnath~\cite{rr-iubtstsls-98} presented two algorithms to find an approximate median pair. The first one is used whenever $s\in o(\log \Rin)$, and allows us to find a $2/3$-median of $\Theta$ in $O(sn\Rin^{1/s})$ time using $O(s)$ variables (Lemma 5 of~\cite{rr-iubtstsls-98}, assigning their parameter $p$ to $\Rin^{1/s}$). 
For the case $s\in \Omega(\log \Rin)$, they propose another algorithm (stated in Lemma 3 of~\cite{rr-iubtstsls-98}) that computes an approximate median pair in $O(n\log \Rin)$ time, using $O(\log \Rin)$ variables. We note that Raman and Ramnath used these algorithms to afterwards obtain the exact median, but in here we only need an approximation. 

$\textsc{FindPartitionVertex}(\mathcal C)$ will execute the approximate median pair algorithm, and return the reflex vertex $v$ that induces a $2/3$-median of $\Theta$. By construction, each of the two cones obtained from $\coneC$, by shooting a ray through $v$, will contain at most $2/3$ of the reflex vertices in $\coneC$. 
 
Let $P(n,\Rin)$ be the running time of the approximate median finding algorithm on a chain $\C$ of length $n$ with $\Rin$ reflex vertices inside $\coneC$ (that is, $P(n,r)=O(sn\Rin^{1/s})$ if $s\in o(\log \Rin)$, $P(n,r)=O(n\log \Rin)$ otherwise). We set $h(s)=s \log_{3/2}2 \approx 1.71 s$ (if $s\in o(\log r)$) or $s=\log_{3/2}r$ (otherwise). Observe that in both cases we have $h(s)\in O(s)$. We now prove an upper bound on the running time of Algorithm~\ref{alg:DivideAndConquer} with this implementation of $\textsc{FindPartitionVertex}(\mathcal C)$.

\begin{theorem}\label{theo_determ}
For any $s\in O(\log r)$, $\VisC$ can be computed in $O(\frac{nr}{2^{s}}+n\log^2 r)$  time using $O(s)$ variables.
\end{theorem}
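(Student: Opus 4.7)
The plan is to take correctness and the $O(s)$ space bound directly from Lemma~\ref{lem_correctdiv}, and devote the entire argument to bounding the running time. I decompose the work into three contributions: the calls to $\textsc{FindPartitionVertex}$ at internal nodes of the recursion tree, the accompanying ray-shooting queries, and the final invocations of Algorithm~\ref{alg:subroutine} at the leaves.

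First I would control the size of the chains as a function of the depth. Since $\textsc{FindPartitionVertex}$ returns a vertex that induces a $2/3$-median of $\Theta$, every split reduces by a factor of at most $2/3$ the number of reflex vertices contained in the cone. A chain at depth $d$ therefore contains at most $(2/3)^d r$ reflex vertices in its cone. Choosing $h(s)=s\log_{3/2}2$ when $s\in o(\log r)$ gives $(2/3)^{h(s)}=2^{-s}$, so each leaf holds $\leq r/2^{s}$ reflex vertices; choosing $h(s)=\log_{3/2} r$ when $s\in\Omega(\log r)$ forces each leaf to contain $\leq 1$ reflex vertex. In both regimes $h(s)=O(s)$, matching the depth budget of Lemma~\ref{lem_correctdiv}.

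Next I would bound the cost per level. The chains at a fixed depth are pairwise interior-disjoint subchains of $\C$, sharing at most their endpoints, so $\sum_i n_i = O(n)$ and $r_i\leq r$. In case 1 ($s\in o(\log r)$) the Raman–Ramnath algorithm costs $O(s\,n_i\,r_i^{1/s})$ per chain; summing and using $r_i\leq r$ gives $O(s\,n\,r^{1/s})$ per level, and multiplying by the $O(s)$ depth yields divide cost $O(s^{2}n\,r^{1/s})$. The ray-shooting queries add only $O(n)$ per level and are absorbed. In case 2 ($s\in\Omega(\log r)$) the per-level cost is $O(n\log r)$ across $O(\log r)$ levels, i.e.\ $O(n\log^{2}r)$. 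The leaf work, using Theorem~\ref{theo:Sequential algorithm} with $\bar r_i\leq\max(r/2^{s},2)$, telescopes to $O(n r/2^{s})$ in total.

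Combining the pieces, case 1 totals $O(nr/2^{s}+s^{2}n\,r^{1/s})$ and case 2 totals $O(n\log^{2}r)$. The main technical step I expect to wrestle with is the elementary inequality $s^{2}r^{1/s}=O(r/2^{s}+\log^{2}r)$, which is what collapses case 1 into the announced bound. I would prove it by splitting on whether $s\leq c\log r$ for an appropriate constant $c$: in that range $r/2^{s}$ grows fast enough to dominate $s^{2}r^{1/s}$, while outside that range $r^{1/s}=O(1)$ and $s^{2}=O(\log^{2}r)$, so the $\log^{2}r$ term absorbs the cost. Putting everything together yields the running time $O\!\bigl(\tfrac{nr}{2^{s}}+n\log^{2}r\bigr)$ with $O(s)$ variables.
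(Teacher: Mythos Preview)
Your proposal is correct and follows essentially the same route as the paper: split the work into internal-node cost (summing $P(n_i,r_i)$ over each level and multiplying by the depth $h(s)$) and leaf cost (bounding $\bar r_i$ by $r(2/3)^{h(s)}$ and using $\sum_i n_i=O(n)$), then handle the two regimes $s\in o(\log r)$ and $s\in\Theta(\log r)$ separately with the two Raman--Ramnath bounds. The only cosmetic difference is how the comparison in case~1 is argued: the paper observes directly that for $s\ge 3$ one has $s^{2}r^{1/s}=O(r^{1/3})$ while $r/2^{s}\ge r^{2/3}$, whereas you package the same comparison as the inequality $s^{2}r^{1/s}=O(r/2^{s}+\log^{2}r)$; both are valid and lead to the same conclusion.
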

\begin{proof}
Consider any node $u_i$ of the recursion tree of the algorithm and let $\mathcal C_i$ be the chain processed at this node. Let $n_i$ and $r_i$ be the size of $\mathcal C_i$ and the number of reflex vertices lying inside $\Delta_{\mathcal C_i}$, respectively.

If $u_i$ is a non-terminal node of the recursion, then the running time at $u_i$ is bounded by $O(P(n_i,r_i))$. 
Note that a vertex of the input can only appear in at most two chains of the same depth. Hence, the total cost of all non-terminal nodes of a fixed level $j$ in the recursion tree is bounded by $\sum_{u_i} O(P(n_i, r_i)) \leq O(P(n,r))$. By definition, there are at most $h(s)$ levels of recursion, hence the time spent in all the non-terminal executions of the algorithm is bounded by $O(P(n,r)h(s))$.

It remains to consider the time spent in the terminal nodes. Each terminal node $u_i$ will need $O(n_i\overline{r}_i)$ time, where $\overline{r}_i$ denotes the number of visible reflex vertices on $\mathcal C_i$. Recall that terminal nodes are only executed whenever either $\C_i$ has a constant number of reflex vertices or we have reached $h(s)$ levels of recursion. Further note that, at each level of recursion at least a third of the reflex vertices are discarded. In particular, we have that $\overline{r_i}\leq r(\frac{2}{3})^{h(s)}$. 

Similar to non-terminal nodes, vertices cannot be present in more than two terminal nodes. Thus, the total time spent in the terminal nodes is bounded by $$\sum_{u_i} O(n_i\overline{r_i}) \leq \sum_{u_i} O(n_ir(2/3)^{h(s)})) \leq O(nr(2/3)^{h(s)}).$$ Therefore, the total time spent by the algorithm becomes  $O(P(n,r)h(s)+nr(2/3)^{h(s)})$. 

This expression can be simplified by distinguishing between different workspace sizes. 

\begin{description}
\item[Case $s\in o(\log r)$.] In this case we have $P(n,r)h(s)=O(s^2nr^{1/s})$, and in particular $O(s^2nr^{1/s}) \in O(nr^{1/3})$ (since $s$ is a parameter that can be chosen to be at least 3). Further recall that $h(s)=s\log_{3/2}2$, and in particular  $(2/3)^{h(s)}=2^{-s}$. Thus, the second term simplifies to $\frac{nr}{2^s}$. 
Since $s\leq \log_2(r)/3$ and the running time decreases as $s$ grows, we have   $\frac{nr}{2^s} \geq \frac{nr}{r^{1/3}} \in \Omega(n\sqrt{r})$. 

That is, the running time of our algorithm is expressed as the sum of two terms that, when $s\in o(\log r)$, the first one is at most $O(nr^{1/3})$ whereas the second one is at least $\Omega(n\sqrt{r})$. Asymptotically speaking, the first one can be ignored, and the running time is dominated by $O(\frac{nr}{2^s})$ (i.e., applying Algorithm~\ref{alg:subroutine} to each terminal node). 
\item[Case $s\in\Theta(\log r)$.] In this case we can use the faster method of Raman and Ramnath (i.e. $P(n,r)=O(n\log r)$). Recall that in this case we set $h(s)=\log_{3/2}r$, hence  $(2/3)^{h(s)}=1/r$. In particular, the running time of the second term simplifies to $O(nr(2/3)^{h(s)})= 
O(n)$. Since  $s\in\Theta(\log r)$, the running time is dominated by the first term (i.e., finding the split direction), which is $O(P(n,r)h(s))=O(n\log^2 r)$. 
\end{description}
Observe that in both cases the running time is bounded by $O(\frac{nr}{2^{s}}+n\log^2 r)$, hence the claim holds. 
\end{proof}

{\bf Remark} Note that, although we only consider algorithms that use up to $O(\log r)$ variables, one could study what happens whenever more space is allowed. However, we note that increasing the size of our workspace will not reduce the running time, since the time bottleneck of this approach is determined by the approximate median method of Raman and Ramnath. 



\subsection{Randomized approach}\label{second}
Whenever $s\in \Theta (\log n)$, the running time of the previous algorithm is dominated by the \textsc{FindPartitionVertex} procedure. Motivated by this, in this section we consider a faster (albeit randomized) partition method. 

The randomized method proceeds as follows: let $k$ be the  number of reflex vertices of $\C$ lying inside $\coneC$ (note that $k$ can be computed in linear time by scanning $\C$). Select a random number $i$ between $1$ and $k$ (uniformly at random). The idea is to output the $i$-th reflex vertex inside $\coneC$ (computed by walking counterclockwise along $\C$). However, we must first check that this vertex will make a balanced partition. In order to check so, we make another scan of $\mathcal C$, and we count the angular rank of the $i$-th reflex vertex (among the reflex vertices in $\coneC$). If its rank is between $k/3$ and  $2k/3$, we use it as partition vertex. Otherwise, we pick another random number and repeat the process until such a vertex is found.

\begin{lemma}\label{lem_runningtime}
The randomized version of \textsc{FindPartitionVertex} has expected running time $O(n)$.
\end{lemma}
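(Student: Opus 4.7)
The plan is to prove the expected $O(n)$ bound by a standard geometric-distribution argument: bound the cost per trial and lower-bound the success probability of each trial.

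First I would analyze the cost of a single attempt. Counting the number $k$ of reflex vertices of $\C$ inside $\coneC$ takes one scan in $O(n)$ time. Given a random index $i \in \{1,\ldots,k\}$, locating the $i$-th reflex vertex of $\coneC$ (in the order in which we meet them while walking CCW along $\C$) takes another scan in $O(n)$ time. Finally, computing its angular rank among the reflex vertices of $\coneC$ requires comparing its CCW-angle against the CCW-angles of all other reflex vertices in $\coneC$, which is again a single scan in $O(n)$ time. Hence each attempt uses $O(n)$ time and only $O(1)$ extra variables (a counter and the current candidate).

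Next I would lower-bound the probability that an attempt succeeds. The walking order along $\C$ and the angular order in $\coneC$ are two permutations on the same $k$-element set of reflex vertices, so the map ``walking index'' $\mapsto$ ``angular rank'' is a bijection onto $\{1,\ldots,k\}$. The number of angular ranks lying in $[k/3,2k/3]$ is at least $\lfloor k/3 \rfloor$; consequently, the number of walking indices $i$ whose corresponding vertex has angular rank in $[k/3,2k/3]$ is also at least $\lfloor k/3 \rfloor$. Since $i$ is chosen uniformly at random, the probability that the attempt succeeds is at least $1/3 - O(1/k)$, which is at least some absolute constant $p_0>0$ (for $k$ larger than a small threshold, and trivially for small $k$ since then the base algorithm handles the case directly via Algorithm~\ref{alg:subroutine}).

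Finally, the number $T$ of attempts until success is dominated by a geometric random variable with success probability $p_0$, so $\mathbb{E}[T] \leq 1/p_0 = O(1)$. By Wald's identity (or simply by linearity of expectation applied to the expected number of attempts), the total expected running time is $\mathbb{E}[T] \cdot O(n) = O(n)$, as claimed. The main (minor) obstacle is simply making the counting in the second step precise; once one recognizes that walking order and angular order are both permutations of the same set of reflex vertices in $\coneC$, the constant lower bound on the acceptance probability follows immediately.
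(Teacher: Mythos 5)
Your proof is correct and follows essentially the same route as the paper: each attempt costs $O(n)$ via a constant number of scans, each attempt succeeds with probability bounded below by a positive constant (roughly $1/3$), and the geometric/Bernoulli-trials argument gives an expected constant number of attempts, hence $O(n)$ expected time overall. Your treatment is in fact slightly more careful than the paper's, which asserts the success probability is ``exactly $1/3$'' without worrying about rounding, whereas you account for the $\lfloor k/3\rfloor$ issue and the small-$k$ cases.
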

\begin{proof}
The probability that, when choosing a reflex vertex uniformly at random, we pick one whose rank is between $1/3$ and $2/3$ is exactly $1/3$. If each time we make the choices independently, we are performing Bernoulli trials whose probability of success is $1/3$. Hence, the expected number of times we have to choose a random index is a constant (three in this case). For each try we only need to check its rank (which can be done in $O(n)$ time by performing two scans of the input).
Therefore we conclude that the expected running time of \textsc{FindPartitionVertex} is $O(n)$.
\end{proof}

\begin{theorem}\label{theo_randomi}
For any $s\in O(\log r)$, $\VisC$  can be computed in $O(\frac{nr}{2^{s}}+n\log r)$ expected time using $O(s)$ variables.
\end{theorem}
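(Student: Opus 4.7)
The plan is to reuse the recursion-tree analysis of Theorem~\ref{theo_determ} almost verbatim, replacing the deterministic cost of $\textsc{FindPartitionVertex}$ by the expected-cost bound of Lemma~\ref{lem_runningtime} and invoking linearity of expectation. Correctness and the $O(s)$-variable bound need no new work: Lemma~\ref{lem_correctdiv} applies as soon as $\textsc{FindPartitionVertex}$ returns, with probability $1$, a reflex vertex whose angular rank lies strictly between $k/3$ and $2k/3$, which the randomized procedure guarantees by rejecting any candidate outside that range. In particular, the structural guarantees of the recursion — depth at most $h(s)$, and each input vertex appearing in at most two chains per level — hold deterministically; the only random quantity is the cost spent at each non-terminal node.

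With this in place, I would first bound the non-terminal work. At a node processing a chain $\mathcal C_i$ of length $n_i$, Lemma~\ref{lem_runningtime} gives an expected cost of $O(n_i)$ for the randomized partition, and the subsequent ray-shooting query adds another $O(n_i)$. Since the chains at any fixed level of the recursion tree satisfy $\sum_i n_i \leq 2n$, the expected work at one level is $O(n)$, so by linearity of expectation across the at most $h(s)$ levels, the total expected non-terminal work is $O(n\, h(s))$.

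For the terminal nodes, nothing changes from Theorem~\ref{theo_determ}: a terminal chain contains at most $r(2/3)^{h(s)}$ visible reflex vertices, so Algorithm~\ref{alg:subroutine} runs in $O(n_i \cdot r(2/3)^{h(s)})$ time, and the across-level sum again uses the ``each vertex appears in at most two terminal chains'' property to give $O(nr(2/3)^{h(s)})$ total. Plugging in the same two choices of $h(s)$ as before handles the two regimes: for $s\in o(\log r)$, setting $h(s)=s\log_{3/2}2$ gives non-terminal cost $O(ns)\subseteq O(n\log r)$ and terminal cost $O(nr/2^{s})$; for $s\in\Theta(\log r)$, setting $h(s)=\log_{3/2}r$ gives non-terminal cost $O(n\log r)$ and terminal cost $O(n)$. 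In both cases the total expected time is $O(\frac{nr}{2^{s}} + n\log r)$.

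The only delicate point is the interaction between the random running times at different nodes. Since the random bits used at distinct calls to $\textsc{FindPartitionVertex}$ are independent and the recursion tree's \emph{shape} depends only on the chosen ranks (all of which lie in $[k/3,2k/3]$ by construction), the total running time is a sum of at most $O(h(s))$ per-level totals whose expectations are each $O(n)$, and linearity of expectation yields the claimed bound without needing any concentration argument. This is the step I expect to be the main place to argue carefully in the write-up, since the recursion's structure and its cost are entangled through the same random choices.
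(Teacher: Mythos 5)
Your proposal is correct and follows essentially the same route as the paper, which simply reuses the analysis of Theorem~\ref{theo_determ} with $P(n,r)=O(n)$ from Lemma~\ref{lem_runningtime}. Your extra remark that linearity of expectation suffices because the structural guarantees of the recursion (depth at most $h(s)$, each vertex in at most two chains per level) hold deterministically for every run is a valid and welcome elaboration of what the paper leaves implicit.
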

\begin{proof}
The proof is identical to the proof of Theorem \ref{theo_determ}, just taking into account that now $P(n,r)=O(n)$, hence the running time of the second term is decreased by a $\log r$ factor. 
\end{proof}

Observe that this algorithm is faster than the previous one when $s\in \Theta(\log r)$.
We conclude by summarizing the different algorithms presented in this paper. 

\begin{theorem}
Given a polygon $\Poly$ of $n$ vertices, out of which $r$ are reflex, the visibility polygon of a point $q\in\Poly$ can be computed in $O(n\Rout)$ time using $O(1)$ variables (where $\Rout$ is the number of reflex vertices of $\Vis$) or $O(\frac{nr}{2^{s}})$ time using $O(s)$ variables (for any $s\in o(\log r)$). If $\Omega(\log r)$ variables are available, the running time decreases to $O(n\log^2 r)$ time or $O(n\log r)$ randomized expected time.
\end{theorem}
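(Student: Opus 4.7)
The plan is to recognize that this final statement is a consolidated summary of the bounds already established in Theorems~\ref{theo:Sequential algorithm}, \ref{theo_determ}, and \ref{theo_randomi}, and the main task is to (i) reduce the global problem on $\Poly$ to the independent-chain problem those theorems solve, and (ii) simplify the expressions in the regimes of $s$ advertised in the statement.

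For the reduction, I would first compute the reference point $p_0\in\bd$ (the point of $\bd$ closest to $q$ on the horizontal ray through $q$) in $O(n)$ time using $O(1)$ variables, as described in the preliminaries. Since $p_0$ is visible, the boundary $\bd$ traversed counterclockwise from $p_0$ back to $p_0$ is an independent chain in the sense of Section~\ref{sec:Preliminaries} (its two endpoints coincide and are visible), so $\Vis$ is exactly the visibility polygon of this chain. Feeding this chain to Algorithm~\ref{alg:subroutine} yields the first bound, $O(n\Rout)$ time with $O(1)$ variables, by Theorem~\ref{theo:Sequential algorithm}. Feeding it to Algorithm~\ref{alg:DivideAndConquer} yields the remaining bounds via Theorems~\ref{theo_determ} and~\ref{theo_randomi}.

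To obtain the clean forms $O(nr/2^s)$, $O(n\log^2 r)$, and $O(n\log r)$ from the ``sum of two terms'' bounds in Theorems~\ref{theo_determ} and~\ref{theo_randomi}, I would analyze two regimes. When $s\in o(\log r)$, one has $2^s=r^{o(1)}$, so $r/2^s=r^{1-o(1)}$ and hence $nr/2^s$ dominates $n\log^2 r$ (respectively $n\log r$); the deterministic bound collapses to $O(nr/2^s)$, which is exactly what the statement claims. When $s\in\Omega(\log r)$, the first term $nr/2^s$ becomes $O(n)$, and the running time is governed by the second term: $O(n\log^2 r)$ deterministically and $O(n\log r)$ in expectation. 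Putting the four cases side by side gives the four bounds in the statement; correctness and the $O(s)$-space guarantee are inherited from Lemma~\ref{lem_correctdiv}.

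There is essentially no new technical obstacle here, since every ingredient has already been proved; the one thing to be careful about is the case analysis on $s$ and the verification that $n\log^2 r$ (resp.\ $n\log r$) is indeed the dominant term when $s\in\Omega(\log r)$ and that $nr/2^s$ dominates when $s\in o(\log r)$. The only minor subtlety worth an explicit sentence is that Algorithm~\ref{alg:subroutine} and Algorithm~\ref{alg:DivideAndConquer} are stated for independent chains, so the proof needs the one-line justification that $\bd$, re-rooted at $p_0$, qualifies as such a chain.
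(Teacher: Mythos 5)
Your proposal is correct and matches the paper's (implicit) argument: the paper states this theorem as a pure summary of Theorems~\ref{theo:Sequential algorithm}, \ref{theo_determ}, and~\ref{theo_randomi} with no separate proof, and your reduction of $\Poly$ to the independent chain rooted at $p_0$ plus the two-regime simplification of the $O(\frac{nr}{2^s}+n\log^2 r)$ bound is exactly the intended reasoning. The only point worth one extra clause is that for $s\in\Omega(\log r)$ the terminal-node cost is $O(n)$ not because $nr/2^s\in O(n)$ for every such $s$, but because the algorithm in that regime sets the recursion depth to $h(s)=\log_{3/2} r$, as in the second case of the proof of Theorem~\ref{theo_determ}.
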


\bibliographystyle{abbrv}
\bibliography{visi}

\end{document}